\newtheorem{theorem}{Theorem}[section]
\newtheorem{corollary}[theorem]{Corollary}
\numberwithin{equation}{section}
\newcommand{\rr}{{\mathbb{R}}}
\newcommand{\nn}{{\mathbb{N}_0}}
\newcommand{\zz}{{\mathbb{Z}}}
\newcommand{\ee}{{\mathbb{E}\,}}
\newcommand{\pp}{{\mathbb{P}}}
\newcommand{\oh}{{\mathcal{O}}}
\newcommand{\supp}{{\operatorname{supp }}}
\newcommand{\ketbra}[1]{{{|#1\rangle\langle #1|}}}
\newcommand{\beq}[1]{\begin{equation} \label{#1}}
\newcommand{\eeq}{\end{equation}}
\newcommand{\p}{\mathcal{P}}
\begin{document}
\title[Hierarchical operators]{Singular spectrum and recent results on hierarchical operators}
\author{Per von Soosten}
\address{Zentrum Mathematik, TU M\"{u}nchen\\
Boltzmannstr. 3, 85747 Garching, Germany}
\email{vonsoost@ma.tum.de}
\author{Simone Warzel}
\address{Zentrum Mathematik, TU M\"{u}nchen\\
Boltzmannstr. 3, 85747 Garching, Germany}
\email{warzel@ma.tum.de}
\thanks{This work was supported by the DFG (WA 1699/2-1).}
\subjclass[2010]{Primary 47-06; Secondary 47B80}
\keywords{Localization, absolutely continuous spectrum, hierarchical operators}
\begin{abstract} We use trace class scattering theory to exclude the possibility of absolutely continuous spectrum in a large class of self-adjoint operators with an underlying hierarchical structure and provide applications to certain random hierarchical operators and matrices.  We proceed to contrast the localizing effect of the hierarchical structure in the deterministic setting with previous results and conjectures in the random setting. Furthermore, we survey stronger localization statements truly exploiting the disorder for the hierarchical Anderson model and report recent results concerning the spectral statistics of the ultrametric random matrix ensemble.
\end{abstract}
\maketitle

\section{Hierarchical operators} \label{abstractsection}
Random Hamiltonians with hierarchical structure provide highly simplified and analytically tractable toy models for the Anderson localization-delocalization transition, whose full version is beyond the reach of the currently available mathematical machinery. In this article, we define a notion of abstract hierarchical operator and use trace class scattering theory to show that the hierarchical structure forbids the formation of absolutely continuous spectrum even in the deterministic setting. For deterministic operators, such a result is optimal regarding the spectral type since changing even a single potential value can produce singular continuous spectrum~\cite{MR1298942}. The secondary aim of this article is to survey recent results concerning localization in two concrete disordered hierarchical models, the hierarchical Anderson model introduced by Bovier in~\cite{MR1063180} and the ultrametric random matrix ensemble of Fyodorov, Ossipov and Rodriguez~\cite{1742-5468-2009-12-L12001}. This is done in Sections \ref{hamsection} and \ref{uesection}, where we also show how to capture these models in the abstract framework and apply our main result, thereby proving the absence of absolutely continuous spectrum for the infinite-volume Hamiltonians of these models.

The idea of exploiting the stability of the absolutely continuous spectrum to rule out its presence altogether goes back at least to the work of Dombrowski~\cite{MR757504, MR808928, MR891145} and was rediscovered and popularized in Simon and Spencer's treatment of one-dimensional Schr\"{o}dinger operators with potential barriers~\cite{MR1017742}. These arguments are based on comparing the operator in question with a direct sum of finite-dimensional matrices, which has only pure-point spectrum. If the resolvents agree up to a trace class perturbation, the Kato-Rosenblum theorem~\cite{MR529429} asserts that the absolutely continuous spectra of the two operators are equal, which is to say, both are empty. We follow the same approach here, but the hierarchical structure allows the direct comparison of the two operators without passing to the resolvents first.

Let us now turn to a detailed definition of hierarchical operators and the statement and proof of our main result. Inspired by~\cite{MR2276652}, we define a hierarchy in some (necessarily countable) set $X$ to be a sequence $\{\p_r\}_{r=0}^\infty$  of partitions of $X$ such that:
\begin{itemize}
\item the members of $\p_0$ are finite sets,
\item each member of $\p_r$ is a finite union of members of $\p_{r-1}$, and
\item for any $x, y \in X$ there exists $r \geq 0$ such that $x$ and $y$ lie in a common member of $\p_r$.
\end{itemize}
Thus, if $x \in X$ and $r \geq 0$, there exists a unique member of $\p_r$ containing $x$, which we will denote by $B_r(x)$. We say that an operator $H$ on $\ell^2(X)$ is hierarchical if it is of the form
\beq{hdef} H = \sum_{r \geq 0} \sum_{B \in \p_r}  H(B)
\eeq
where $H(B)$ is a self-adjoint operator on $\ell^2(B)$ for every $B \in \p_r$ and
\beq{trassumption}\sum_{r \geq 1} \|H(B_r(x))\|_1 < \infty\eeq
for all $x \in X$. In~\eqref{trassumption},  $\|\cdot\|_p$ is the Schatten $p$-norm, i.e., the $\ell^p$-norm of the singular values. The requirement~\eqref{trassumption} is not strictly necessary for~\eqref{hdef} to make sense as an operator on $\ell^2(X)$, but we will need it to classify the spectrum, and there is no real loss of generality in assuming it from the beginning. The action of $H$ on the function
\[\delta_x(y) = \begin{cases} 1 & \mbox{ if } y=x\\0& \mbox{ otherwise}\end{cases}\]
depends only on those summands of~\eqref{hdef} with $x \in B$, that is, $H\delta_x = S_x \delta_x$ with
\[S_x = \sum_{r \geq 1} H(B_r(x)).\]
By~\eqref{trassumption}, the sum defining $S_x \delta_x$ converges for all $x \in X$ so the domain of $H$ contains at least the functions with finite support. In the main new result of this article we prove that $H$ defines a self-adjoint operator without any absolutely continuous spectrum if the couplings $H(B_r(x))$ become sufficiently weak as $r \to \infty$, by showing that the removal of $S_x$ completely disconnects the system.

\begin{theorem} \label{mainthm} Every hierarchical operator $H$ is self-adjoint with $\sigma_{ac}(H) = \emptyset$.
\end{theorem}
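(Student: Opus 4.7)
Fix once and for all a reference point $x \in X$ and write $H = (H - S_x) + S_x$. The strategy is to show that $H - S_x$ is a direct sum of finite-dimensional self-adjoint matrices, while $S_x$ is trace class; the Kato-Rosenblum theorem will then exclude the absolutely continuous spectrum, and self-adjointness of $H$ will follow by a bounded perturbation of $H - S_x$.

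The geometric picture is that removing $S_x$ severs the hierarchy along the nested blocks $B_0(x) \subseteq B_1(x) \subseteq \cdots$, leaving the shell partition $X = \bigsqcup_{r \geq 0} C_r$, where $C_0 := B_0(x)$ and $C_r := B_r(x) \setminus B_{r-1}(x)$ for $r \geq 1$. The third hierarchical axiom guarantees these sets genuinely cover $X$, and the second axiom ensures each is finite by induction on $r$. For $y \in C_r$ with $r \geq 1$, the smallest level on which $y$ shares a block with $x$ is $s = r$, hence $B_s(x) = B_s(y)$ for every $s \geq r$ and
\[
S_x \delta_y = \sum_{s \geq r} H(B_s(y))\delta_y,
\]
so $(H - S_x)\delta_y = \sum_{s=0}^{r-1} H(B_s(y))\delta_y$ lies in $\ell^2(B_{r-1}(y))$. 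The key combinatorial observation is that $B_{r-1}(y)$ is a member of $\p_{r-1}$ distinct from $B_{r-1}(x)$ yet contained in $B_r(x)$, so it lies entirely within the shell $C_r$. The analogous check for $y \in C_0$ yields $(H - S_x)\delta_y = H(B_0(x))\delta_y$, which again stays inside $\ell^2(C_0)$.

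With the block decomposition $H - S_x = \bigoplus_{r \geq 0} (H - S_x)|_{\ell^2(C_r)}$ established, each summand is a finite-dimensional self-adjoint matrix, so $H - S_x$ is self-adjoint on its natural direct-sum domain and has only pure point spectrum. Hypothesis~\eqref{trassumption} yields $\|S_x\|_1 \leq \sum_{r \geq 1}\|H(B_r(x))\|_1 < \infty$, so $S_x$ is trace class and in particular bounded; adding it back preserves self-adjointness on the same domain and reproduces the formal sum~\eqref{hdef} on functions of finite support. Kato-Rosenblum then delivers $\sigma_{ac}(H) = \sigma_{ac}(H - S_x) = \emptyset$.

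The main substantive step is the shell decomposition itself: one has to verify that no surviving operator $H(B_s(y))$ with $s < r$ couples $C_r$ to any other shell, which amounts to the containment $B_{r-1}(y) \subseteq C_r$ coming from the partition structure of $\p_{r-1}$. After that, the trace class bound on $S_x$ and the scattering theorem are entirely standard.
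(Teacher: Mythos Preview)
Your proof is correct and follows essentially the same route as the paper: fix a base point $x$, decompose $X$ into the shells $C_r = B_r(x)\setminus B_{r-1}(x)$, show that $F = H - S_x$ leaves each $\ell^2(C_r)$ invariant (via the containment $B_{r-1}(y)\subseteq C_r$ for $y\in C_r$), and then invoke Kato--Rosenblum with the trace class perturbation $S_x$. The only cosmetic difference is that the paper writes each shell as $X_r = B_{r-1}(x_r)$ for some representative $x_r$, whereas you work directly with the set-theoretic difference; your formulation is in fact slightly cleaner since it does not presuppose that the shell is a single $\p_{r-1}$-block.
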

\begin{proof}
Setting $X_0 = B_0(x)$, there exist $\{x_r\}$ in $X$ such that $X_r =  B_{r-1}(x_r)$ is a disjoint sequence of finite sets satisfying
\[B_r(x) = B_{r-1}(x) \cup X_r, \quad B_{r-1}(x) \cap X_r = \emptyset\]
for all $r \geq 1$. This means that
\[X = \bigcup_{r \geq 0} B_r(x) = \bigcup_{r \geq 0} X_r\]
and hence
\[\ell^2(X) = \bigoplus_{r \geq 0} \ell^2(X_r) \]
is a direct sum of finite-dimensional subspaces. If $y \in X_r$, then
\[H \delta_y = \sum_{r \geq 0} H(B_r(y)) \delta_y\]
and therefore
\[F \delta_y \vcentcolon =  (H - S_x)\delta_y = \sum_{s=0}^{r-1}H(B_s(y)) \delta_y\]
since $X_r \subset B_r(x)$. Thus $\ell^2(X_r)$ is an invariant subspace for $F$ because it is an invariant subspace of $H(B_s(y))$ for every $s \le r-1$, and this proves that $F$ is a direct sum of finite-dimensional matrices. Since $S_x$ is trace class, $H = F + S_x$ is self-adjoint on the domain of $F$ and
\[\sigma_{ac}(H)  = \sigma_{ac}(F) = \emptyset\]
by the Kato-Rosenblum theorem.
\end{proof}

\section{Hierarchical Schr\"odinger operators} \label{hamsection}
We now turn to the definition introduced in~\cite{MR1063180} of a hierarchical analogue of the finite-difference Laplacian on $\zz^d$. To capture the effect of nearest-neighbor hopping, or local averaging, it is natural to choose
\[H(B) = p_r \ketbra{\varphi_B},\]
if $r \geq 1$ and $B \in \p_r$, where
\[\varphi_B = |B|^{-1/2} 1_B\]
is the normalized maximally extended state in $\ell^2(B)$. Moreover, the decay in the interaction strength over long distances may be encoded in the requirement that the coefficients $p_r$ satisfy
\[\sum_{r=1}^\infty |p_r| < \infty.\]
We leave the choice of $H(B)$ with $B \in \p_0$ open for now by setting $H(B) = 0$ in this case, and define the hierarchical Laplacian as $\Delta$ as the hierarchical operator $H$ furnished by~\eqref{hdef} with this specific choice of $H(B)$.

An important quantity associated with a hierarchical Laplacian is the spectral dimension, whose definition is motivated by matching the decay of the density of states near the band edge to the lattice case. The precise definition,
\[d_s  =  \lim_{\lambda \downarrow 0 } \ \frac{\ln \ \langle \delta_k , 1_{[\lambda_\infty-\lambda, \lambda_\infty]}(\Delta) \delta_k \rangle }{\ln \sqrt{\lambda}}\]
with $\lambda_\infty = \sup \sigma(\Delta)$, is equal to the spatial dimension $d$ if $\Delta$ is replaced by the $d$-dimensional lattice Laplacian. The spectral dimension can be computed explicitly for a large class of hierarchies $\{\p_r\}$ and coefficients $p_r$ (see~\cite{MR2276652}).

A hierarchical Schr\"{o}dinger operator is the sum of a hierarchical Laplacian and a potential,
\[H = \Delta + V,\]
with
\[V = \sum_{x \in X} V_x \ketbra{\delta_x}\]
for $V_x \in \rr$. We note that this $H$ is still hierarchical in the sense of~\eqref{hdef} if we choose $\p_0$ to be the trivial partition, whose members are singletons, and set
\[H(\{x\}) = V_x.\]
Since $\{p_r\}$ was assumed to be a summable sequence, it is clear that $\|S_x\|_1 < \infty$ for any $x \in X$, and thus Theorem \ref{mainthm} answers the most immediate question concerning the localization properties of $H$.
\begin{corollary}\label{hamacspectrum} Every hierarchical  Schr\"odinger operator satisfies $\sigma_{ac}(H) = \emptyset$.
\end{corollary}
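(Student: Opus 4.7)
The plan is straightforward: the corollary should fall out by recognizing the hierarchical Schrödinger operator as an instance of the abstract structure~\eqref{hdef} and then invoking Theorem~\ref{mainthm}. There is nothing substantive to prove beyond verifying the hypotheses of the abstract theorem.

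First I would record the decomposition of $H = \Delta + V$ in the form~\eqref{hdef}. Taking $\p_0$ to consist of singletons and setting $H(\{x\}) = V_x$ absorbs the potential into the $r=0$ layer, while $H(B) = p_r |\varphi_B\rangle\langle\varphi_B|$ for $r \geq 1$ and $B \in \p_r$ reproduces the hierarchical Laplacian. Summing these contributions recovers $H$ as a hierarchical operator on $\ell^2(X)$.

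Next I would check the trace-class summability~\eqref{trassumption}. For each $r \geq 1$, the operator $H(B_r(x))$ is a rank-one self-adjoint operator with singular value $|p_r|$, so $\|H(B_r(x))\|_1 = |p_r|$. Hence
\[\sum_{r \geq 1} \|H(B_r(x))\|_1 = \sum_{r \geq 1} |p_r| < \infty\]
by the summability hypothesis on $\{p_r\}$. It is important to notice that~\eqref{trassumption} sums only over $r \geq 1$, so no trace-class condition is imposed on the $r = 0$ contribution $H(\{x\}) = V_x$; the potential values $V_x$ can be arbitrary real numbers, which is what makes the corollary essentially free.

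With both hypotheses in place, Theorem~\ref{mainthm} applies directly and gives both the self-adjointness of $H$ and $\sigma_{ac}(H) = \emptyset$. There is no genuine obstacle here: the analytic work was carried out in proving the abstract theorem via the Kato--Rosenblum route, and the only real content of the corollary is the bookkeeping that identifies the hierarchical Schrödinger operator with a hierarchical operator whose $r \geq 1$ layers satisfy~\eqref{trassumption}.
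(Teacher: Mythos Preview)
Your proposal is correct and matches the paper's own argument essentially line for line: the paper also takes $\p_0$ to be the singleton partition with $H(\{x\}) = V_x$, observes that summability of $\{p_r\}$ gives the trace class condition~\eqref{trassumption} since each $H(B_r(x))$ is rank one with $\|H(B_r(x))\|_1 = |p_r|$, and then invokes Theorem~\ref{mainthm}. There is nothing to add.
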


To obtain the stronger localization results that follow, the presence of true disorder is crucial, and we shall henceforth require that the $\{V_x\}$ are drawn independently from a common density $\rho \in L^\infty$, thereby obtaining the usual hierarchical Anderson model. Furthermore, it is often necessary to have some fixed decay rate of $p_r$ in terms of the partitions $\p_r$, so we focus on an explicit model, whose configuration space is $X = \nn$, and whose hierarchy is given in terms of the partitions $\{\p_r\}$ defined by
\beq{concreteP}\nn = \{0, ..., 2^{r}-1\} \cup \{2^r, ..., 2\cdot 2^r - 1\} \cup ...\eeq
It is now easiest to take $p_r =  \epsilon \, 2^{-cr}$ for some $\epsilon, c \in (0,\infty)$, as we shall do here, and one may check that the spectral dimension is then
\[d_s  = \frac{2}{c}.\]
In particular, the assumption $c > 0$ amounts to the requirement that the spectral dimension be finite.

A more complete characterization of $\sigma(H)$ was first given by Molchanov~\cite{MR1463464} and Kritchevski~\cite{MR2352276, MR2276652}, who proved that $H$ almost surely has only pure point spectrum, provided that $d_s < 4$ is small enough (i.e. $c > 1/2$ is large enough) or that $\rho$ is a mixture of Cauchy distributions with barycenters strictly separated from the real axis. Their argument, which is based on approximation by the truncations
\[H_s = \sum_{r = 0}^s \sum_{B \in \p_r} H(B)\]
and the Simon-Wolff criterion~\cite{MR820340}, was extended to the full parameter range $c > 0$ and $\rho \in L^\infty$ in~\cite{vonSoosten2017}. Moreover, it is possible to show that the spatial decay of the eigenfunctions of $H$ with respect to the metric
\beq{hmetric} d(j,k) = \min  \left\{ r \geq 0 \, | \, \mbox{$j $ and $ k $ belong to a common member of $\mathcal{P}_r$} \right\}\eeq
is essentially controlled by the decay of $p_r$.

\begin{theorem}[\cite{vonSoosten2017}]\label{hamppspectrum}
The  spectrum of $H $ is almost surely of pure-point type with eigenfunctions satisfying
\beq{hamefbound}\sum_{k \in \nn} 2^{\frac{c}{4} d(0,k)} |\psi_E(k)|^2 < \infty
\eeq
for any $ E \in \sigma(H) $. 
\end{theorem}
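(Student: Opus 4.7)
The plan is to deduce pure-point spectrum from the Simon-Wolff rank-one perturbation criterion and to obtain the weighted decay \eqref{hamefbound} from a quantitative Aizenman-Molchanov type fractional-moment estimate on the Green's function. Each potential value $V_x$ acts on $H$ as a rank-one perturbation of the cyclic vector $\delta_x$, so Simon-Wolff reduces pure-point spectrum to the assertion that
\[\sum_{k \in \nn}|G(x,k;E+i0)|^2 < \infty\quad\text{for Lebesgue-a.e.\ }E,\]
almost surely, for every $x \in \nn$, where $G(x,k;z) = \langle\delta_x,(H-z)^{-1}\delta_k\rangle$. The weighted bound \eqref{hamefbound} amounts to the same statement with $2^{(c/4)d(x,k)}$ inserted into the sum, and both will follow from a single quantitative input.

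The core step, following \cite{vonSoosten2017}, is a fractional-moment bound
\[\sup_{\eta > 0}\ \ee\bigl[|G(0,k;E+i\eta)|^s\bigr]\le C\,2^{-\alpha s\,d(0,k)}\]
valid for some $s \in (0,1)$ and an effective rate $\alpha$ chosen so that the subsequent Aizenman-Molchanov machinery delivers the rate $c/4$ in \eqref{hamefbound}. I would prove this by induction on the hierarchical level $r = d(0,k)$, using the truncations $H_r$ introduced in Section \ref{hamsection}. Since $H_r$ decomposes as a direct sum over the members of $\p_r$ (the structural fact exploited in the proof of Theorem \ref{mainthm}), the transition $H_r \to H_{r+1}$ on the relevant $\p_{r+1}$-block $B' = B_{r+1}(0)$ is the single rank-one perturbation $p_{r+1}\ketbra{\varphi_{B'}}$. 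The associated resolvent identity expresses $G_{r+1}$ in terms of $G_r$ and extracts two small factors from every step: $|p_{r+1}| = \epsilon\,2^{-c(r+1)}$ and $\|\varphi_{B'}\|_\infty = 2^{-(r+1)/2}$. Iterating across $d(0,k)$ levels, balanced against the $2^{r-1}$-growth of the shells $X_r$, and optimizing $s \in (0,1)$ produces the rate $c/4$.

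The main obstacle is the denominator
\[1 + p_{r+1}\langle\varphi_{B'},(H_r - z)^{-1}\varphi_{B'}\rangle\]
that appears in the rank-one resolvent identity and can vanish whenever $z$ is close to an eigenvalue of $H_r$ restricted to $B'$. The remedy is a spectral-averaging argument: integration over any single potential value $V_y$ with $y \in B'\setminus B_r(0)$ endows the denominator with a bounded Lebesgue density (using $\rho \in L^\infty$), and for $s<1$ the resulting fractional moment stays uniformly finite in $\eta$ and $z$. Propagating this decoupling through the induction without degrading the rate $c/4$ is the delicate step; conceptually, the scheme is a quantitative, disorder-driven version of the purely deterministic block decoupling that powered Theorem \ref{mainthm}.
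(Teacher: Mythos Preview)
First, note that this survey paper does not itself prove Theorem~\ref{hamppspectrum}; the result is quoted from \cite{vonSoosten2017}. The only description of the method the paper offers is that it extends an argument ``based on approximation by the truncations $H_s$ and the Simon-Wolff criterion'' to the full range $c>0$, and the surrounding discussion makes clear that the extension runs through the renormalization-group map $\mathcal{R}$ acting on the pair $(\{p_r\},\rho)$, with the decisive analytic input being the bound \eqref{assumption} on the evolved single-site density $T_{p_r}\cdots T_{p_1}\rho$.

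Your proposal agrees at the level of Simon--Wolff plus truncations, but the quantitative engine you invoke---an Aizenman--Molchanov fractional-moment bound on the Green's function---is not what \cite{vonSoosten2017} does, despite your attribution ``following \cite{vonSoosten2017}.'' The cited work tracks the $L^\infty$-norm of the renormalized density; fractional moments of $G$ are a different piece of technology.

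There is also a structural gap in your outline. Simon--Wolff requires $\sum_k |G(0,k;E+i0)|^2<\infty$ for Lebesgue-a.e.\ $E$, and for \eqref{hamefbound} you need the weighted version with $2^{(c/4)d(0,k)}$ inserted. Your declared input, however, is $\ee|G(0,k;E+i\eta)|^s$ for some $s<1$. Passing from an $s$-th moment with $s<1$ to an almost-sure boundary $\ell^2$-sum is not automatic; the standard route from fractional moments to pure-point spectrum bypasses Simon--Wolff entirely and goes through the eigenfunction correlator. If that is what you intend, say so; if you really mean Simon--Wolff, you owe a separate argument that you have not supplied. Finally, the claim that ``optimizing $s\in(0,1)$ produces the rate $c/4$'' is asserted without a calculation, and your bookkeeping is off at a more basic level: for $d(0,k)=R$ the sites $0$ and $k$ lie in disjoint blocks of $H_{R-1}$, so $G_{R-1}(0,k)=0$ and the connection is created by the \emph{single} rank-one step at level $R$, not by ``iterating across $d(0,k)$ levels'' as you describe. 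The factors $2^{-cR}$ and $2^{-R/2}$ you cite thus enter once, multiplied against resolvent overlaps $\langle\delta_0,(H_{R-1}-z)^{-1}\varphi_{B_R(0)}\rangle$ that can themselves be large, and the emergence of the exponent $c/4$ from this is not explained.
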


Via the RAGE theorem~\cite{MR529429}, Theorem \ref{hamppspectrum} has the dynamical implication that, for almost every realization of the disorder, the quantum probability of a particle ever leaving a finite radius $R$ vanishes asymptotically as $R \to \infty$. However, the complete lack of control of the implied amplitude of the wave function in~\eqref{hamefbound} makes this an assertion of a fundamentally qualitative nature and does not completely rule out the appearance of more subtle delocalization effects in finite volumes. Indeed, prior to the work~\cite{vonSoosten2017}, but after the appearance of~\cite{MR2352276, MR2276652, MR1463464}, Metz et.\ al.\ ~\cite{PhysRevB.88.045103} analyzed the hierarchical Anderson model numerically in large finite volumes by relying on renormalization group equations (see also~\cite{monthusgarel}) whose effect on the parameters is
\[\mathcal{R} (\{p_r\}_{r \geq 1}, \rho) = \left( (p_{r+1})_{r\geq 1} , T_{p_1} \rho \right),\]
where $T_p \rho $ is the probability density of
\[\left(\frac{1}{2V} + \frac{1}{2 V^\prime} \right)^{-1} + p,\]
and $V$ and $V^\prime$ are drawn independently from $\rho$. If $c < 1$ and $\rho$ is the density of a sufficiently weak Gaussian random variable, the authors of~\cite{PhysRevB.88.045103} claimed delocalization at the special energy $E = \sum p_r$ in the sense of the inverse participation ratios satisfying
\beq{metzclaim}\lim_{n \to \infty} \frac{1}{|B_n| \nu(E)} \ee \left[\sum_{\lambda \in \sigma(H_n)} \|\psi_\lambda\|_4 \tilde{\delta}_{E - \lambda} \right] = 0
\eeq
($\tilde{\delta}_{E-\lambda}$ being a suitably regularized Dirac delta). In~\eqref{metzclaim}, $\nu$ denotes the infinite-volume density of states defined by
\[\ee \langle \delta_0, f(H) \delta_0 \rangle = \int \! f(E) \nu(E) \, dE\]
and $\{\psi_\lambda\}$ are the normalized eigenfunctions of the finite-volume Hamiltonian
\[H_n = 1_{B_n(0)} H 1_{B_n(0)}.\]

Nevertheless, the renormalization dynamics $\mathcal{R}$ appear to effectively drive the Hamiltonian into a regime of high disorder, where strong finite-volume localization bounds apply generically. In fact, for any interval $I \subset \rr$ one can prove that there exists $\delta > 0$ such that
\beq{assumption} \sup_{E \in I }  \|T_{ p_r} ... T_{  p_1} \varrho(\cdot + E)\|_\infty = \mathcal{O} (2^{(c- \delta) r})
\eeq
whenever:
\begin{itemize}
\item $d_s < 2$,
\item $V$ is Gaussian and $d_s < 4$, or
\item $V$ has a Cauchy component and $d_s < \infty$.
\end{itemize}
This observation shows that the strength of the random potential does not change significantly in comparison to the decrease in the strength of the kinetic term,
\[(\mathcal{R}p)_r = p_{r+1} = 2^{-c} p_r,\]
and can be used to prove localization in terms of the eigenfunction correlator
\[Q_n(j,k; I) = \sup \, |\langle \delta_k, f(H_n) \delta_j \rangle | ,\]
where the supremum ranges over those $f \in C_0$ with $\supp f \subset I$ and $\|f\|_\infty \le 1$. 
\begin{theorem}[\cite{vonSoosten2017}] \label{hamecthm} If~\eqref{assumption} is satisfied in a bounded interval $I \subset \rr$, then there exist $C ,\mu \in (0,\infty) $ such that
\[\sup_{n \in \nn } \sup_{j \in \nn}   \sum_{k \in \nn} 2^{\mu d(j,k) } \ee[ Q_n(j,k; I)] \leq C | I | .\]
\end{theorem}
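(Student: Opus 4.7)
The plan is to prove the estimate by the fractional moment method adapted to the hierarchical structure. First, I would reduce to the Green's function: for a bounded potential and a suitable $s\in(0,1)$, the Aizenman--Graf--Molchanov inequality gives
\[\ee[Q_n(j,k;I)] \le C \int_I \ee\left[\left|\langle \delta_j, (H_n - E - i0)^{-1}\delta_k\rangle\right|^s\right]\, dE,\]
so the task becomes exhibiting exponential decay in $d(j,k)$ of the fractional moment of the Green's function, uniformly in $n$ and in $E\in I$.

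For $j,k$ with $d(j,k)=r$, the key structural fact is that every coupling at a scale $s>r$ acts on $B_r(j)=B_r(k)$ through a vector proportional to $\varphi_{B_r(j)}$. A single Schur complement onto $\ell^2(B_r(j))$ therefore shows that
\[\left. (H_n - E - i0)^{-1}\right|_{B_r(j)} = \left(\tilde H_r - E + (p_r + \gamma_n(E))\,\ketbra{\varphi_{B_r(j)}}\right)^{-1},\]
where $\gamma_n(E)$ is a scalar self-energy accumulating the effect of the higher scales and $\tilde H_r$ is the direct sum of the two independent hierarchical operators on the scale-$(r-1)$ sub-blocks of $B_r(j)$. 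The rank-one perturbation formula then yields the factorization
\[\langle \delta_j, G_n\delta_k\rangle = -\frac{(p_r + \gamma_n(E))\,\langle\delta_j, G_j^{(r-1)}\varphi_{B_{r-1}(j)}\rangle\, \langle\varphi_{B_{r-1}(k)}, G_k^{(r-1)}\delta_k\rangle/2}{1 + (p_r + \gamma_n(E))\,\tfrac{1}{2}\bigl(g_j^{(r-1)}+g_k^{(r-1)}\bigr)},\]
with $g_j^{(r-1)},g_k^{(r-1)}$ independent copies of the top diagonal entry at scale $r-1$. The numerators are themselves rescaled Green's function entries in the two sub-blocks, to which the same identity applies recursively.

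The heart of the argument is an induction on $r$ driven by the renormalization map $\mathcal{R}$. At each scale the law of the diagonal entry entering the denominator is obtained from the previous one by $T_{p_s}$, so after $r$ iterations it has density $T_{p_r}\cdots T_{p_1}\varrho$ up to an energy shift. Assumption~\eqref{assumption} then controls this density by $\oh(2^{(c-\delta)r})$ in the supremum norm, which yields, for any $s\in(0,1)$,
\[\ee\left[\left|1 + (p_r + \gamma_n(E))\,\tfrac{1}{2}\bigl(g_j^{(r-1)}+g_k^{(r-1)}\bigr)\right|^{-s}\right] \le C\, 2^{(c-\delta)rs}\]
uniformly in $E\in I$ and in the external randomness fixing $\gamma_n(E)$. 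Combining this with $|p_r|^s \le \epsilon^s 2^{-crs}$ produces a geometric gain of $2^{-\delta rs}$ per scale, and iterating gives $\sup_{E\in I}\ee[|\langle\delta_j, G_n\delta_k\rangle|^s]\le C\,2^{-\mu' r}$ for some $\mu'>0$. Choosing $\mu<\mu'-1$ absorbs the volume growth $|B_r(j)\setminus B_{r-1}(j)|\le 2^{r-1}$ when summing over $k$, delivering the claimed bound.

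The main obstacle is the careful bookkeeping of the independence structure. The numerator factors and the denominator of the rank-one identity share the random variables $g_j^{(r-1)},g_k^{(r-1)}$, while $\gamma_n(E)$ depends on the potentials at all scales above $r$. The plan is to condition first on the high-scale randomness and then to integrate the rank-one identity against the two independent sub-block densities; this is the step where the uniformity in $E$ demanded by \eqref{assumption}, rather than a single-energy density bound, becomes essential.
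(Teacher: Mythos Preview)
The present paper does not prove this theorem; it is quoted from~\cite{vonSoosten2017}, and the surrounding discussion only records the mechanism---the renormalization dynamics $\mathcal{R}$ drive the effective single-site density into a high-disorder regime governed by~\eqref{assumption}, after which standard finite-volume localization bounds apply. There is therefore no proof here to compare against line by line.

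That said, your sketch is faithful to the mechanism described and to the argument actually carried out in the cited reference: the hierarchical rank-one structure yields an exact recursion for the Green function, the reciprocal of the diagonal entry $g^{(r-1)}$ is distributed according to $T_{p_{r-1}}\cdots T_{p_1}\varrho(\cdot+E)$, and~\eqref{assumption} supplies the density control which, paired with the decay of $p_r$, produces a geometric gain per scale in the fractional moment. The issues you flag---decoupling the shared randomness in numerator and denominator, and handling the exterior self-energy $\gamma_n$ by conditioning---are the genuine technical points. Two places deserve more care. First, the numerator carries $|p_r+\gamma_n|^s$ rather than $|p_r|^s$, so one must also show that $\gamma_n$ is $\mathcal{O}(2^{-cr})$ with good probability (it inherits this from $\sum_{s>r}p_s$, but through resolvent entries of the complement that need their own a~priori bound). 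Second, the displayed estimate $\ee[|D|^{-s}]\le C\,2^{(c-\delta)rs}$ is not literally what integrating one $W$-variable against its density gives; the raw bound is linear in the density $M_{r-1}$ rather than $M_{r-1}^s$, and the correct bookkeeping pairs this with the coupling at the same scale to recover the net factor $2^{-\delta s}$ per step. Neither point is fatal to the strategy.
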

It follows that
\[\sum_{k : d(j,k) \geq R} \ee |\langle \delta_k, 1_I(H) e^{itH} \delta_j \rangle | ^2 \leq C \, 2^{-\mu R},\]
which gives a quantitative averaged bound on the quantum probability that a particle, which was started at $j \in \nn$ and subsequently filtered by energy, ever leaves $B_R(j)$. Theorem \ref{hamecthm} is not compatible with the claims of~\cite{PhysRevB.88.045103} since it implies lower bounds for the inverse participation ratios of the eigenfunctions; the details can be found in~\cite{vonSoosten2017}.

A further canonical method of probing localization properties in finite systems consists of examining the statistics of the energy levels, i.e.\ the local behavior of the eigenvalues on the microscopic scale. This is best quantified by the rescaled eigenvalue point process
\beq{pointprocessdef} \mu_n(f) = \sum_{\lambda \in \sigma(H_n)} f(2^n(\lambda - E)).\eeq
In terms of this random measure, the renormalization group approach lets one prove the following theorem, which was previously proved by Kritchevski~\cite{MR2413200} for $d_s < 1$.
\begin{theorem}[\cite{vonSoosten2017}]\label{hampoissonthm} Suppose~\eqref{assumption} is satisfied in an open set $I \subset \rr$ and $E \in I$ is a Lebesgue point of $\nu$. Then, $\mu_n$ converges in distribution to a Poisson point process with intensity $\nu(E)$ as $n \to \infty$.
\end{theorem}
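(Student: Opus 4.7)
The plan is to verify, via Kallenberg's theorem on convergence of simple point processes, the two defining characteristics of a Poisson point process of intensity $\nu(E)$: for every bounded interval $A \subset \rr$, (i) $\ee[\mu_n(A)] \to \nu(E)\,|A|$, and (ii) $\pp[\mu_n(A) = 0] \to e^{-\nu(E)\,|A|}$. Condition (i) follows from a spatial-averaging argument starting from $\mu_n(A) = \mathrm{tr}\,1_{E + A/2^n}(H_n)$, combined with the convergence of the finite-volume density of states $\nu_n$ to $\nu$ and the assumption that $E$ is a Lebesgue point of $\nu$; the hypothesis~\eqref{assumption} is not needed here.

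The heart of the argument is (ii), which I would approach through the hierarchical recursion
\[H_n = H_{n-1}^{(1)} \oplus H_{n-1}^{(2)} + p_n \ketbra{\varphi_{B_n(0)}},\]
in which $H_{n-1}^{(1)}$ and $H_{n-1}^{(2)}$ are i.i.d.\ copies of $H_{n-1}$ acting on disjoint sub-blocks. Iterating this down to an intermediate scale $k = k(n)$ satisfying $k \to \infty$ and $n-k \to \infty$, one decomposes $H_n = \bigoplus_{j=1}^{2^{n-k}} H_k^{(j)} + W_{n,k}$, with $W_{n,k} = \sum_{r = k+1}^n \sum_{B \in \p_r, B \subset B_n(0)} p_r\,\ketbra{\varphi_B}$. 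Ignoring $W_{n,k}$ for the moment, the point process associated with the decoupled operator is an i.i.d.\ superposition of $2^{n-k}$ rescaled spectral processes attached to single blocks. Through the Krein-type identity that underlies the definition of $T_p$, the expected number of eigenvalues of a single $H_k^{(j)}$ falling in $E + A/2^n$ can be expressed as an integral against the iterated density $T_{p_k} \cdots T_{p_1}\varrho(\cdot+E)$, whose sup norm is uniformly bounded by~\eqref{assumption}. A classical Poissonization theorem for triangular arrays of i.i.d.\ thinnings then yields convergence of the decoupled point process to a Poisson point process with intensity $\nu(E)$.

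The principal obstacle is to argue that this Poisson limit survives the addition of $W_{n,k}$, whose operator norm need not be small when $c<1$. Here Theorem~\ref{hamecthm} becomes the essential input: the eigenfunction correlator bound forces the eigenfunctions of $H_n$ with energies near $E$ to be concentrated, up to exponentially small tails, on individual hierarchical blocks at scale $\sim 2^k$, so a resolvent expansion shows that each eigenvalue in $E + A/2^n$ is perturbed by $o(2^{-n})$ under $W_{n,k}$ on an event of probability tending to one. Supplementing this stability with a Minami-type second-moment bound --- obtainable in the hierarchical setting by iterating rank-one Krein identities and using~\eqref{assumption} to tame the resulting double-resolvent integral --- one concludes that the gap probability of $\mu_n$ agrees with that of the decoupled process in the limit, yielding (ii) and completing the proof.
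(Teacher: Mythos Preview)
The paper does not contain a proof of this theorem. Theorem~\ref{hampoissonthm} is stated as a result of~\cite{vonSoosten2017} and is quoted here without argument; the surrounding text only indicates that ``the renormalization group approach lets one prove the following theorem,'' referring to the dynamics $\mathcal{R}$ and the bound~\eqref{assumption}. There is therefore no proof in the present paper against which to compare your proposal.

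That said, your sketch is broadly consonant with what the paper signals about the method in~\cite{vonSoosten2017}: the hierarchical decoupling $H_n = \bigoplus_j H_k^{(j)} + W_{n,k}$, the use of the iterated densities $T_{p_k}\cdots T_{p_1}\varrho$ controlled by~\eqref{assumption}, and a Minami-type input are exactly the ingredients one expects from a renormalization-group proof of Poisson statistics in this model. One point to be cautious about is your appeal to Theorem~\ref{hamecthm} to stabilize eigenvalues under $W_{n,k}$: in~\cite{vonSoosten2017} the Poisson statistics and the eigenfunction-correlator bounds are parallel consequences of the renormalization analysis rather than one being derived from the other, and the eigenvalue stability is more naturally obtained from rank considerations ($W_{n,k}$ has rank at most $n-k$, hence can shift at most $n-k$ eigenvalues into or out of any window) together with a Wegner/Minami estimate, rather than from a resolvent expansion relying on eigenfunction localization. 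Your route is not wrong in spirit, but it risks circularity and is heavier than necessary.
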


It remains an open problem to prove~\eqref{assumption} for more general densities $\rho$, and we hope to inspire further work in this direction. Still, even without a general theorem of this kind, our results leave no doubt that the hierarchical approximation in finite dimension is too crude to capture the Anderson transition on the lattice.

\section{Ultrametric ensemble} \label{uesection}
In this section we continue to consider the hierarchy~\eqref{concreteP} and retain the notation $d$ for the restriction of the hierarchical metric~\eqref{hmetric} to the finite volume $B_n = \{1,2,..., 2^n\}$. The second model we wish to study in this article, the ultrametric ensemble, consists of hierarchical matrices on $\ell^2(B_n)$ whose off-diagonal entries are also random variables with variances satisfying power-law decay in $d(k, \ell)$. We are thus led to consider
\beq{uedef} H_n = \sum_{r =0}^n 2^{-\frac{(1+c)}{2}r} \Phi_{n,r},\eeq
where $c \in \rr$ is a real parameter and the entries of $\Phi_{n,r}$ are independent centered real Gaussian random variables with variances
\[\ee \left| \langle \delta_\ell, \Phi_{n,r} \delta_k \rangle \right|^2 = 2^{-r} \begin{cases} 2 & \mbox{ if } d(k,\ell) = 0\\ 1 & \mbox{ if } 1 \le d(k,\ell) \le r\\ 0 & \mbox{ otherwise. }\end{cases}\]
With this convention, each $\Phi_{n,r}$ is a direct sum of $2^{n-r}$ random matrices drawn independently from the Gaussian Orthogonal Ensemble (GOE), rather than the Gaussian Unitary Ensemble used in~\cite{1742-5468-2009-12-L12001}, but our results are not based on any additional symmetries and are valid in both cases.

The definition~\eqref{uedef} shows that $H_n$ has independent entries whose typical magnitudes are approximately
\[ \left( \ee \left| \langle \delta_\ell, H_n \delta_k \rangle \right|^2\right)^{1/2} \approx 2^{-\frac{2+c}{2}d(\ell,k)},\]
and hence the ultrametric ensemble interpolates between a perfectly localized random potential and a completely delocalized Wigner matrix as the parameter $c$ varies in $\rr$. Indeed, this ensemble is a hierarchical analogue of the Power-Law Random Band Matrices~\cite{PhysRevE.54.3221} in the sense which was first introduced to statistical mechanical models by Dyson~\cite{MR0436850}. The authors of~\cite{1742-5468-2009-12-L12001} argue for a localization-delocalization transition in the eigenfunctions of $H_n$ as the parameter varies from $c > 0$ to $c < 0$ with a theoretical physics level of rigor and support this claim with numerical simulations.

An infinite-volume version of the ultrametric ensemble with parameter $c \in \rr$ may be defined formally using the prescription~\eqref{hdef}, by letting each $H(B)$ be an independent and appropriately scaled GOE, that is, 
\[\ee \left|\langle \delta_x, H(B) \delta_y \rangle\right|^2 = |B|^{-(1 + c)} \frac{1 + \delta_{xy}}{|B|}.\]
Of course, this $H$ can only be defined as an operator on $\ell^2(X)$ if $c > -1$, and in this case one may easily check that $H\delta_x = S_x\delta_x$ converges almost surely to an element of $\ell^2(X)$. As with the hierarchical Anderson model, we may apply Theorem \ref{mainthm} to exclude absolutely continuous spectrum when $S_x$ is trace class.

\begin{corollary} If $c > 1$, the infinite-volume ultrametric operator $H$ satisfies $\sigma_{ac}(H) = \emptyset$ almost surely.
\end{corollary}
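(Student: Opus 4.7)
The plan is to invoke Theorem~\ref{mainthm}, so the task reduces to verifying the trace-class summability~\eqref{trassumption} almost surely. Fix $x \in X$ and a scale $r \geq 1$. Because $H(B_r(x))$ is an appropriately rescaled GOE on $\ell^2(B_r(x))$, I would factor out the deterministic prefactor and write $H(B_r(x)) = 2^{-r(1+c)/2} M_r$, where $M_r$ has the standard normalization $\ee\, (M_r)_{yz}^2 = (1+\delta_{yz})/2^r$. In this normalization, the empirical spectral measure of $M_r$ has a semicircle limit on $[-2,2]$, so the typical operator norm of $M_r$ is of order one, independently of $r$.

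The key step is the estimate $\ee \|M_r\|_1 = O(2^r)$. I would derive it by applying the Cauchy--Schwarz inequality to the singular values, $\|M_r\|_1 \leq \sqrt{2^r}\,\|M_r\|_2$, followed by Jensen's inequality on the concave square root and the direct second-moment computation
\[\ee \operatorname{Tr}(M_r^2) \;=\; \sum_{y,z \in B_r(x)} \ee\,(M_r)_{yz}^2 \;=\; \sum_{y,z \in B_r(x)} \frac{1+\delta_{yz}}{2^r} \;=\; 2^r + 1.\]
Reinstating the prefactor yields $\ee \|H(B_r(x))\|_1 \leq \sqrt{2}\,\, 2^{r(1-c)/2}$.

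Summing over $r$ and using Tonelli gives
\[\ee \sum_{r\geq 1} \|H(B_r(x))\|_1 \;\leq\; \sqrt{2}\sum_{r \geq 1} 2^{r(1-c)/2},\]
which is finite precisely when $c > 1$. Consequently, $\sum_r \|H(B_r(x))\|_1 < \infty$ almost surely for each fixed $x$, and a countable union bound over $x \in X$ extends the conclusion to all $x$ simultaneously on a single full-measure event. On that event, Theorem~\ref{mainthm} delivers both the self-adjointness of $H$ and the vanishing of $\sigma_{ac}(H)$.

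I anticipate no serious obstacle: the scaling $|B|^{-(1+c)/2}$ was designed so that the operator norm of $H(B)$ is of order $|B|^{-(1+c)/2}$, and trace-norm estimates cost one additional factor of $|B|$, producing the exponent $(1-c)/2$ that dictates the threshold $c > 1$. A sharper treatment via the semicircle law would refine the implicit constant but not the summability condition, and indeed the trace-class hypothesis of Theorem~\ref{mainthm} is saturated at $c = 1$, so the method can be pushed no further without a different argument.
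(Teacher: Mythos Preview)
Your proof is correct and follows essentially the same approach as the paper: bound $\ee \|H(B_r(x))\|_1$ via Cauchy--Schwarz on the singular values and Jensen on the Hilbert--Schmidt norm to obtain the exponent $(1-c)/2$, then pass to an almost-sure statement and invoke Theorem~\ref{mainthm}. Your use of Tonelli to conclude $\sum_r \|H(B_r(x))\|_1 < \infty$ almost surely directly from the summability of the expectations is in fact slightly more economical than the paper's route through Markov's inequality and Borel--Cantelli, but the substance is identical.
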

\begin{proof} By applying Jensen's inequality twice we get the inequality
\begin{align*}\ee \|H(B_r(x))\|_1 &\le |B_r(x)|^{1/2} \ee \|H(B_r(x))\|_2\\
&\le |B_r(x)|^{1/2} \left(\sum_{y, y^\prime \in B_r(x)} \ee  \left|\langle \delta_{y^\prime}, H(B_r(x)) \delta_y \rangle\right|^2 \right)^{1/2}\\
&= |B_r(x)|^{\frac{1-c}{2}}
\end{align*}
for the expected trace norm of $H(B_r(x))$. Because each member of $\p_{r+1}$ is the union of two members of $\p_r$, we have $|B_r(x)| \geq 2^r$ so Markov's inequality shows that
\[\sum_{r \geq 0} \pp\left( \|H(B_r(x))\|_1 \geq  |B_r(x)|^{\frac{1-c}{4}} \right) <\infty,\]
provided $c > 1$. Therefore, the Borel-Cantelli lemma implies that there almost surely exists some constant $C<\infty$ such that
\[\|H(B_r(x))\|_1 \le C |B_r(x)|^{\frac{1-c}{4}} \le C \, 2^{\frac{1-c}{4}r}\]
for all $r \geq 0$. Hence
\[\|S_x\|_1 \le \sum_{r \geq 0} \|H(B_r(x))\|_1 < \infty\]
and Theorem \ref{mainthm} implies the result.
\end{proof}

We remark that Theorem \ref{mainthm} fails to cover the entirety of the regime in which the infinite volume operator is well-defined, and it is an open question whether the formation of continuous spectrum is possible in the remaining parameter range.

To obtain more detailed results in finite volumes, we now focus on the eigenvalue statistics of $H_n$. The effect of a GOE perturbation $\Phi$ on the spectrum of an arbitrary symmetric $N \times N$ matrix $H_0$ can be described dynamically by thinking of $H_0 + \Phi$ as $H_0 + \Phi(t)$ where $\Phi(t)$ is the symmetric stochastic matrix flow 
\[\langle \delta_\ell, \Phi(t) \delta_k \rangle = \sqrt{\frac{1 + \delta_{k\ell}}{N}} B_{k \ell}(t)\]
and $\{B_{k\ell}\}$ is a symmetric array of independent standard Brownian motions. The observation of Dyson~\cite{MR0148397} was that the eigenvalues $\lambda_1(t) \le ... \le \lambda_N(t)$ then undergo Dyson Brownian motion, that is, evolve according to the stochastic differential equation
\[d\lambda_j(t) = \sqrt{\frac{2}{N}}dB_j(t) + \frac{1}{N}\sum_{i \neq j} \frac{dt}{\lambda_j(t) - \lambda_i(t)}.\]
The relevance of this discussion to the ultrametric ensemble can be seen from the relation
\[H_n = \sum_{r =0}^{n-1} 2^{-\frac{(1+c)}{2}r} \Phi_{n,r} + 2^{-\frac{(1+c)}{2}r}\Phi_{n,n} = H_{n-1} \oplus H_{n-1}^\prime + \Phi(t),\]
where the system size $N$ is now $2^n$, $t = 2^{-(1+c)n}$, and $H_{n-1}^\prime$ is an independent copy of $H_{n-1}$. One may thus construct the spectrum of $H_n$ by initializing $\sigma(H_0)$ to consist of a single standard normal random variable and following the recursion
\begin{enumerate}
\item Sample an independent copy $\sigma(H_{k-1}^\prime)$ of $\sigma(H_{k-1})$
\item Let $\sigma(H_{k})$ be the evolution of $\sigma(H_{k-1}) \cup \sigma(H_{k-1}^\prime)$ under Dyson Brownian motion with duration $t = 2^{-(1+c)k}$.
\end{enumerate}

In his original paper, Dyson motivated the conjecture that the Dyson Brownian motion of $N$ particles locally equilibrates at times greater than $t = N^{-1}$, that is, the particle configuration $\{\lambda_j(t)\}$ becomes extremely rigid locally and the $k$-point correlation functions are very close to those of the GOE. Thus, one does not expect the addition of independent randomness in step (1) to significantly affect the spectrum of $H_{k}$ on small enough scales when $c < 0$ and the level statistics should asymptotically agree with the GOE. On the other hand, if $c > 0$, the Dyson Brownian motion in step (2) is not running for a long enough time to compensate the additional fluctuations of step (1) and the resulting level statistics are only a slight perturbation of a point process with many independent components. This second point is the basic idea behind the proof of the following theorem of~\cite{resflow}. As before, $\nu$ denotes the density of states in the infinite volume and $\mu_n$ is the random measure~\eqref{pointprocessdef}.
\begin{theorem}[\cite{resflow}]\label{uepoissonthm} Suppose $c > 0$ and $E \in \rr$ is a Lebesgue point of $\nu$. Then, $\mu_n$ converges in distribution to a Poisson point process with intensity $\nu(E)$ as $n \to \infty$.
\end{theorem}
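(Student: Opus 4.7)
My plan is to exploit the recursive construction via Dyson Brownian motion (DBM) outlined just before the theorem. Unfolding the recursion $k$ times gives
\[H_n = M_n^{(k)} + \mathcal{N}_n^{(k)}, \qquad M_n^{(k)} = \bigoplus_{j=1}^{2^k} H_{n-k}^{(j)},\]
where the $H_{n-k}^{(j)}$ are i.i.d.\ copies of $H_{n-k}$ and $\mathcal{N}_n^{(k)}$ is the accumulated GOE perturbation from levels $n-k+1, \ldots, n$. I would choose $k = k(n) \to \infty$ with $k \le \alpha n$ for some fixed $\alpha < c/(2+c)$. The argument then splits into two parts: (i) on the microscopic scale $2^n$, the point process $\mu_n$ coincides asymptotically with the analogous process built from $M_n^{(k)}$, and (ii) this superposition of $2^k$ independent block spectra converges to a Poisson point process of intensity $\nu(E)$.

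For (i), I would bound the rescaled eigenvalue displacement. By Weyl monotonicity and the interlacing inherent in DBM, each $\lambda_j(H_n) - \lambda_j(M_n^{(k)})$ is controlled in rms by the cumulative Brownian increments, whose variance at level $n-k+i$ scales like $t_i/N_{n-k+i} = 2^{-(2+c)(n-k+i)}$. The geometric series in $i$ is dominated by its first term $2^{-(2+c)(n-k)}$, so the rms displacement is $\asymp 2^{-(2+c)(n-k)/2}$; after the microscopic rescaling $2^n$ this becomes $2^{-cn/2 + (2+c)k/2}$, which tends to zero precisely under the choice of $\alpha$ above. A union bound, or a quantitative rigidity input, upgrades this to uniform control over the $2^n$ labels in a bounded window around $E$.

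For (ii), the Lebesgue-point hypothesis combined with the existence of $\nu$ as the infinite-volume density of states yields, for any bounded $B \subset \rr$, that each independent block spectrum carries expected mass $\sim 2^{-k} \nu(E) |B|$ in the rescaled window around $E$. The $2^k$-fold superposition therefore has the correct total intensity $\nu(E)$ and forms an infinitesimal triangular array. A Kallenberg-type Poisson limit theorem then reduces the claim to showing that a single block places at most one point in $B$ with overwhelming probability, which follows from a Wegner-type second-moment estimate exploiting the explicit Gaussian form of the blocks in the recursion.

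The main obstacle I anticipate is in step (i): promoting the rms control to a high-probability bound that is uniform in the eigenvalue labels, since DBM repulsion can in principle amplify rare near-coincidences among the independent block spectra. A clean route is to couple $\sigma(H_n)$ to the free (non-interacting) Brownian flow of the block eigenvalues with an error controlled by the number of collisions -- in essence a short-time rigidity statement for DBM -- or equivalently to use the resolvent flow developed in~\cite{resflow}. The secondary input, a Wegner estimate propagated through the recursion, should be routine given the Gaussian structure of each $H(B)$.
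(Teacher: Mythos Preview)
The paper does not contain a proof of this theorem; it is quoted from~\cite{resflow} and accompanied only by the one-sentence heuristic that for $c>0$ the Dyson Brownian motion in step~(2) runs too briefly to erase the independence introduced in step~(1), so the level statistics are ``a slight perturbation of a point process with many independent components.'' Your two-step plan---compare $H_n$ with the block-diagonal operator $M_n^{(k)}$ on the microscopic scale, then apply a Poisson superposition argument to the $2^k$ independent blocks---is exactly a formalization of that heuristic, and your identification of the short-time rigidity of Dyson Brownian motion as the main technical input is precisely what~\cite{resflow} supplies under the name ``local stability of Dyson Brownian motion'' (via the resolvent flow you allude to).

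One imprecision worth flagging: the phrase ``by Weyl monotonicity and the interlacing inherent in DBM, each $\lambda_j(H_n)-\lambda_j(M_n^{(k)})$ is controlled in rms by the cumulative Brownian increments'' conflates two different mechanisms. Weyl's inequality controls eigenvalue differences by the \emph{operator norm} $\|\mathcal{N}_n^{(k)}\|$, which at level $r$ is of order $2^{-(1+c)r/2}$ and, after rescaling by $2^n$, does \emph{not} vanish when $0<c<1$. The per-eigenvalue variance $2t/N=2^{-(2+c)r}$ that you (correctly) compute comes instead from the stochastic differential equation for DBM, and the entire difficulty is then the drift term $N^{-1}\sum_{i\neq j}(\lambda_j-\lambda_i)^{-1}$, whose naive size is again too large. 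You acknowledge this obstacle further down, so the plan is internally consistent---just do not lean on Weyl or interlacing for step~(i), since neither yields the needed bound. The actual argument in~\cite{resflow} controls the drift by propagating a local law for the resolvent through the flow, which is what makes the rms heuristic rigorous.
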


Similar considerations for the off-diagonal spectral measures also allow one to prove localization estimates for the eigenfunctions of $H_n$.
\begin{theorem}[Eigenfunction localization]\label{localizationthm} Suppose $c > 0$ and let $E \in \rr$. Then, there exist $w, \mu, \kappa > 0$, $C<\infty$, and a sequence $m_n$ with $n - m_n \to \infty$ such that for every $x \in B_n$ the $ \ell^2$-normalized eigenfunctions satisfy
\[\pp\left( \sum_{y  \in B_n \setminus B_{m_n}(x)} Q_n(x,y;W) > 2^{-\mu n} \right) \le C \,2^{-\kappa n}\]
with
\[W = \left[E_0 - 2^{-(1-w)n}, E_0 + 2^{-(1-w)n}\right].\]
\end{theorem}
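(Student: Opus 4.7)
The plan is to iterate the Dyson-type recursion $H_k = H_{k-1} \oplus H_{k-1}' + 2^{-(1+c)k/2}\Phi_{k,k}$ from level $n$ down to an intermediate scale $m_n$, producing the decomposition
\begin{equation*}
H_n = H^\flat_n + P_n, \qquad H^\flat_n = \bigoplus_{B \in \p_{m_n}} H^{(B)}, \qquad P_n = \sum_{r=m_n+1}^n 2^{-(1+c)r/2}\Phi_{n,r},
\end{equation*}
where the blocks $H^{(B)}$ are independent copies of $H_{m_n}$ and a direct variance calculation yields $\ee |\langle \delta_y, P_n \delta_x\rangle|^2 \leq C \, 2^{-(2+c)d(x,y)}$ for $d(x,y) > m_n$. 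The key feature is that the resolvent $G^\flat(z) = (H^\flat_n - z)^{-1}$ is block-diagonal on $\p_{m_n}$, so its own eigenfunction correlator vanishes identically between different blocks. Any contribution to $Q_n(x,y;W)$ with $y \notin B_{m_n}(x)$ must therefore be routed through $P_n$.

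Quantitatively, I would represent $Q_n(x,y;W)$ via $\Im G_n(E+i\eta)$ at a regularization $\eta \sim |W| = 2^{-(1-w)n}$ and expand once with the resolvent identity $G_n = G^\flat - G^\flat P_n G_n$. Since $x$ and $y$ sit in different blocks of $\p_{m_n}$, the leading term drops out and only products $G^\flat(x,x') \langle \delta_{y'}, P_n \delta_{x'}\rangle G_n(y',y)$ with $x' \in B_{m_n}(x)$ and $y' \notin B_{m_n}(x)$ remain. A Cauchy--Schwarz argument over $y$ and $y'$, together with the variance bound on $P_n$, produces a geometric sum of the form $\sum_{y \notin B_{m_n}(x)} 2^{-(2+c)d(x,y)}$, which converges for $c > 0$ and is small once $m_n$ is moderately large. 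The surviving resolvent factors are controlled by a Wegner-type bound for $H^\flat_n$, which inherits the required density-of-states regularity from the Gaussian structure of each $H^{(B)}$. Markov's inequality then converts the expectation bound into the required probability estimate.

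The main obstacle is the delicate balance in the choice of $m_n$ and $w$. A useful Wegner bound at the very narrow spectral scale $2^{-(1-w)n}$ requires the density of states of the block $H_{m_n}$ to be regular at that scale near $E$, which pushes $m_n$ to be large; on the other hand, the $P_n$ contribution is genuinely small only when $n - m_n \to \infty$, which pushes $m_n$ in the opposite direction and forces $w$ to be correspondingly small. The hypothesis $c > 0$ is precisely what allows both conditions to be met simultaneously, reflecting the same mechanism that drives Theorem \ref{uepoissonthm}: the Dyson Brownian motion step of duration $2^{-(1+c)n}$ appended at the top level is too brief to equilibrate eigenvectors across the two halves being merged, so the block localization present in $H^\flat_n$ survives along the full recursion.
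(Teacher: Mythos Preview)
The paper does not actually prove this theorem; it is quoted from \cite{resflow} with only the one-line remark that ``similar considerations for the off-diagonal spectral measures'' as in the Poisson-statistics argument (Theorem~\ref{uepoissonthm}) yield the result. That hint points to a \emph{dynamical} argument: one tracks how the off-diagonal pieces of the local spectral measure evolve under the Dyson Brownian motion step of duration $2^{-(1+c)n}$ and shows that, for $c>0$, this step is too short to transport spectral weight between the two halves being merged. Your proposal is in the same spirit but \emph{static}: you freeze the decomposition $H_n=H_n^\flat+P_n$ at a single intermediate scale $m_n$ and expand the resolvent once.

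The genuine worry in your outline is the sentence ``the surviving resolvent factors are controlled by a Wegner-type bound for $H^\flat_n$.'' After one application of $G_n=G^\flat-G^\flat P_n G_n$, one of the two resolvent factors is the \emph{full} $G_n$, not $G^\flat$. A Wegner estimate controls only $\ee\,\Im G(\cdot,\cdot)$ on the diagonal; feeding this into Cauchy--Schwarz over $y,y'$ produces a factor $\eta^{-1}\approx 2^{(1-w)n}$ from each resolvent, while the gain from $P_n$ is only of order $2^{-c\,m_n/2}$ in Hilbert--Schmidt norm on $B_{m_n}(x)$. With $m_n<n$ the bookkeeping does not obviously close for all $c>0$; one typically needs either to iterate the expansion (losing control of combinatorics) or to have an a priori bound on off-diagonal $G_n$, which is essentially the statement you are trying to prove. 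The flow-based approach sidesteps this circularity by bounding the \emph{increment} of the off-diagonal spectral measure along the DBM, where the short time $t=2^{-(1+c)n}$ appears linearly rather than through a resolvent denominator. Your heuristic about the balance of $m_n$ and $w$ is correct, but as written the perturbative step needs either a sharper replacement for the crude $\ell^2$ resolvent bound or an inductive scheme that feeds localization at scale $m_n$ back into the estimate.
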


For times $t \gg N^{-1}$, Landon, Sosoe and Yau~\cite{landonsosoeyauarxiv} have recently proved the fixed energy universality (and more) of the Dyson Brownian motion for very general regularly spaced initial conditions. A key feature of this result is that as $t$ approaches $\oh(1)$, the scale of the required local law may also be relaxed towards $\oh(1)$. If $c < -1$, then the $\ell^2$-norm of $H_n\delta_0$,
\[Z_{n,c} \vcentcolon= \left(\sum_{\ell \in B_n} \ee \left| \langle \delta_\ell, \left(\sum_{r =0}^n 2^{-\frac{(1+c)}{2}r} \Phi_{n,r}\right) \delta_0\rangle \right|^2\right)^{1/2},\]
is approximately $2^{-\frac{1+c}{2}n}$. Therefore, the rescaled Hamiltonian $Z_{n,c}^{-1} \, H_n$, whose spectrum lies in $[-2,2]$ with high probability, has spread
\[M \vcentcolon= \left(\max_{k,\ell \in B_n} \, \ee |\langle \delta_\ell, H_n\delta_k \rangle|^2 \right)^{-1} \approx Z_{n,c}^{2},\]
which grows like a positive power of the system size. Hence, the results of~\cite{MR3068390} show that the semicircle law is valid up to scales of order $M^{-1}$. This is consistent with the idea that the Dyson Brownian motion equilibrates globally for times of order $\oh(1)$. The analysis of~\cite{landonsosoeyauarxiv} can now be applied to prove universality of the $k$-point correlation function, i.e.\ the $k$-th marginal of the symmetrized eigenvalue density $\rho_{H_n}$:
\[\rho^{(k)}_{H_n}(\lambda_1, .. \lambda_k) =  \int_{\rr^{2^n - k}} \! \rho_{H_n} (\lambda_1, ..., \lambda_{2^n}) \, d\lambda_{k+1} ... \, d\lambda_{2^n}.\]
The precise theorem of~\cite{landonsosoeyauarxiv} is formulated in terms of
\begin{align*}\Psi_{n,E}^{(k)}(\alpha_1,...,\alpha_k) &= \rho^{(k)}_{H_n}\left( E + 2^{-n} \frac{\alpha_1}{\rho_{sc}(E)}, ..., E + 2^{-n} \frac{\alpha_k}{\rho_{sc}(E)} \right)\\
 &-  \rho^{(k)}_{GOE}\left( E + 2^{-n} \frac{\alpha_1}{\rho_{sc}(E)}, ..., E + 2^{-n} \frac{\alpha_k}{\rho_{sc}(E)} \right),
\end{align*}
where $\rho^{(k)}_{GOE}$ denotes the $k$-point correlation function of the $2^n \times 2^n$ GOE and $\rho_{sc}$ is the density of the semicircle law.
\begin{theorem}[cf.~\cite{MR3068390} and~\cite{landonsosoeyauarxiv}] Suppose $c < -1$, $E \in (-2,2)$ and $k \geq 1$. Then
\beq{rmtuniv}\lim_{n \to \infty} \int_{\rr^k} \! O(\alpha) \Psi_{n,E}^{(k)}(\alpha) \, d\alpha = 0\eeq
for every $O \in C_c^\infty(\rr^k)$.
\end{theorem}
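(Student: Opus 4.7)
The plan is to follow the two-step strategy sketched in the discussion preceding the theorem: establish a local semicircle law for the rescaled Hamiltonian on microscopic scales, then invoke Dyson Brownian motion universality to absorb the residual free evolution of rescaled duration $\tau = \oh(1)$.

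First I would pass to $\tilde H_n = Z_{n,c}^{-1} H_n$ and verify, from the explicit variance
\[ \ee |\langle \delta_\ell, \tilde H_n \delta_k \rangle|^2 = Z_{n,c}^{-2} (1 + \delta_{k\ell}) \sum_{r = d(k,\ell)}^{n} 2^{-(2+c)r}, \]
that $\tilde H_n$ falls into the class of generalized Wigner matrices treated in~\cite{MR3068390}. The asymptotics $Z_{n,c}^2 \asymp 2^{-(1+c)n}$ valid for $c<-1$ show that the largest entry variance is of order $M^{-1}$ with $M = Z_{n,c}^2 \wedge 2^n$ growing like a positive power of $N = 2^n$; the row sums $\sum_\ell \ee |\langle\delta_\ell, \tilde H_n \delta_k\rangle|^2$ equal $1$ up to a lower-order correction and are independent of $k$; and the Gaussian entries are subgaussian on the natural scale $M^{-1/2}$. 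The corresponding local semicircle law then delivers eigenvalue rigidity and a quantitative estimate on the Green function of $\tilde H_n$ down to spectral scales $\eta \gg M^{-1}$.

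The second step exploits the recursion $H_n = H_{n-1} \oplus H_{n-1}' + \Phi(t)$ with $t = 2^{-(1+c)n}$ introduced before Theorem~\ref{uepoissonthm}. Rescaling by $Z_{n,c}$ turns this into
\[ \tilde H_n = \tilde H_n^{\mathrm{init}} + Z_{n,c}^{-1}\Phi(t), \]
where $\tilde H_n^{\mathrm{init}} := Z_{n,c}^{-1}(H_{n-1} \oplus H_{n-1}')$ is independent of the GOE Dyson Brownian motion $Z_{n,c}^{-1}\Phi(t)$, whose rescaled running time is $\tau = t/Z_{n,c}^2 \asymp 1$. Since $\tau \gg M^{-1}$ and the initial configuration inherits the rigidity of Step~1 blockwise (the ratio $Z_{n-1,c}/Z_{n,c}$ being a harmless constant when $c < -1$), the fixed-energy $k$-point universality theorem of Landon-Sosoe-Yau~\cite{landonsosoeyauarxiv} applies and identifies the local correlation functions of $\tilde H_n$ at any bulk energy with those of the $2^n \times 2^n$ GOE. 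Testing against $O \in C_c^\infty(\rr^k)$ and converting back to $H_n$ through the non-random rescaling by $Z_{n,c}$ then produces~\eqref{rmtuniv}.

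The main technical obstacle is verifying the precise quantitative hypotheses of the two external inputs for this particular block-hierarchical variance profile. For the local law one must check the entrywise bound of order $1/M$, the near-stochasticity of the variance matrix and whichever invertibility condition on the variance operator is required in~\cite{MR3068390}; for~\cite{landonsosoeyauarxiv} one needs rigidity of $\tilde H_n^{\mathrm{init}}$ on a scale much smaller than $\tau$, together with a lower bound on the local eigenvalue density, all uniform in the energy. Both checks should be tractable because the variance profile is entirely explicit and the hierarchical recursion is naturally compatible with an inductive propagation of the local law across successive scales.
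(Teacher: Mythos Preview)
Your proposal is correct and follows essentially the same route the paper sketches in the paragraph preceding the theorem: rescale by $Z_{n,c}$ so that the matrix becomes a generalized Wigner matrix with spread $M$ growing like a positive power of $N$, invoke the local semicircle law of~\cite{MR3068390} on scale $M^{-1}$, and then feed the hierarchical recursion $H_n = H_{n-1}\oplus H_{n-1}' + \Phi(t)$ into the fixed-energy Dyson Brownian motion universality of~\cite{landonsosoeyauarxiv} with rescaled time $\tau\asymp 1 \gg M^{-1}$. The paper does not give a self-contained proof beyond this outline, so your write-up is a faithful expansion of it.
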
 
A corollary of the existence of a local law for the local resolvent on scale $M^{-1}$ is the complete delocalization of the eigenfunctions, in the sense that
\[\|\psi_{E,n}\|_\infty = \oh(M^{-1/2})\]
with overwhelming probability in mesoscopic spectral windows (see~\cite[Thm. 2.21]{0036-0279-66-3-R02}).

In case $c \in [-1,0)$, the density of states is no longer given by the semicircle law but one still expects the local statistics to exhibit random matrix universality. A technical difficulty in applying the results of~\cite{landonsosoeyauarxiv} is the lack of a lower bound on the non-averaged local density of states. This gap in the understanding of the local statistics in the ultrametric ensemble can be closed for the simpler Rosenzweig-Porter model, for which the techniques used in the proof of Theorem \ref{uepoissonthm} yield Poisson statistics if $t \ll N^{-1}$, and the results of~\cite{landonsosoeyauarxiv} can be applied to obtain random matrix universality for all $t \gg N^{-1}$. 
\begin{theorem}[\cite{landonsosoeyauarxiv} and~\cite{resflow}] Suppose $V_1, ... V_N$ are independent random variables with density $\rho \in L^\infty$ and $t = N^{-(1+c)}$ for some $c \in \rr$. Then, as $N \to \infty$,
\[H_N = \mbox{diag}(V_1, ..., V_N) + \Phi(t)\]
satisfies the following statements:
\begin{enumerate}
\item If $c > 0$, the level statistics $\mu_N$ near a Lebesgue point $E$ of $\rho$ converge in distribution to a Poisson point process with intensity $\rho(E)$.
\item If $c < 0$, the $k$-point correlation functions $\rho^{(k)}_N$ are universal in the sense of~\eqref{rmtuniv}.
\end{enumerate}
\end{theorem}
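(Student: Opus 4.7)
The plan is to treat the two regimes separately, using the common observation that $H_N$ is precisely the diagonal $D = \mathrm{diag}(V_1,\ldots,V_N)$ evolved under Dyson Brownian motion for time $t = N^{-(1+c)}$; the sign of $c$ decides whether this flow has enough duration to locally equilibrate the iid initial configuration on the microscopic scale $N^{-1}$.

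For part (1), where $c > 0$ and $t \ll N^{-1}$, I would transfer the argument used for Theorem \ref{uepoissonthm} to this setting, where it simplifies considerably because the pre-DBM configuration is the iid multiset $\{V_j\}$ itself rather than a recursively constructed spectrum. At any Lebesgue point $E$ of $\rho$, the rescaled point process formed from $\{V_j\}$ converges in distribution to a Poisson point process of intensity $\rho(E)$ by a classical Poisson limit theorem, so it suffices to show that $\mu_N$ is asymptotically indistinguishable from this point process. The key input is that for $t \ll N^{-1}$ the Dyson Brownian motion has not had time to produce any appreciable level repulsion on the microscopic scale, which can be quantified through a resolvent comparison between $H_N$ and $D$ at spectral scale $\eta \sim N^{-1}$ and ultimately yields convergence of Laplace functionals of $\mu_N$ to that of the Poisson process.

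For part (2), where $c < 0$ and $t \gg N^{-1}$, I would invoke the fixed-energy universality theorem of Landon, Sosoe and Yau~\cite{landonsosoeyauarxiv} for Dyson Brownian motion with general regularly spaced initial data. Its two hypotheses are (a) approximate quantile spacing of the initial eigenvalues and (b) a local law for the resolvent on a scale commensurate with $t$. Condition (a) holds for the iid $\{V_j\}$ by classical empirical-process estimates because $\rho \in L^\infty$; condition (b) follows from a Stieltjes transform analysis of the free convolution of $\rho$ with the semicircle law of variance $t$, where $\rho \in L^\infty$ forces the imaginary part of the solution to the self-consistent equation to stay strictly positive throughout the bulk and thus produces the required local law.

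\textbf{The main obstacle} is condition (b) in the intermediate range $-1 < c < 0$, where neither $D$ nor $\Phi(t)$ alone governs the density of states and one truly needs a non-averaged lower bound on the local density of states, precisely the gap flagged in the paragraph preceding the theorem. Here the obstacle is passable, unlike in the ultrametric ensemble, because $D$ is a genuine iid sample, so the resulting self-consistent equation is scalar and standard matrix Dyson equation techniques apply directly to the free convolution of $\rho$ with the semicircle.
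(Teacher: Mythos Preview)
The paper does not include a proof of this theorem; it is stated as a cited result from~\cite{landonsosoeyauarxiv} and~\cite{resflow}, with only the one-sentence indication that ``the techniques used in the proof of Theorem~\ref{uepoissonthm} yield Poisson statistics if $t \ll N^{-1}$, and the results of~\cite{landonsosoeyauarxiv} can be applied to obtain random matrix universality for all $t \gg N^{-1}$.'' Your proposal is entirely consistent with this sketch: part~(1) via the Poisson-statistics machinery behind Theorem~\ref{uepoissonthm}, part~(2) via the Landon--Sosoe--Yau fixed-energy universality, and your identification of the local-law verification as the point where the Rosenzweig--Porter model is more tractable than the ultrametric ensemble matches the paper's remark that this is precisely why the gap ``can be closed'' here.
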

This theorem partially confirms the picture sketched in~\cite{0295-5075-115-4-47003,1367-2630-17-12-122002}, but leaves open the conjectured existence of non-ergodic delocalized states in the regime $N^{-1} \ll t \ll 1$. The existence of such a phase was recently proved in the independent works~\cite{benignibourgade} and~\cite{nonergodic}, complementing the eigenfunction bounds in~\cite{MR3134604} and~\cite{resflow}.

\providecommand{\bysame}{\leavevmode\hbox to3em{\hrulefill}\thinspace}
\providecommand{\MR}{\relax\ifhmode\unskip\space\fi MR }
\providecommand{\MRhref}[2]{%
  \href{http://www.ams.org/mathscinet-getitem?mr=#1}{#2}
}
\providecommand{\href}[2]{#2}

\end{document}